\definecolor{color1}{RGB}{199,209,232}
\definecolor{color2}{RGB}{230,231,233}
\DeclareMathOperator*{\minimize}{minimize} % thin space, limits underneath in 
\DeclareMathOperator*{\subjectto}{subject\hspace{3pt} to:\hspace{3pt}} % thin space, 
\newtheorem{theorem}{Theorem}
\newtheorem{proposition}[theorem]{Proposition}
\begin{document}
	
	%\title{Integrated Sensing and Communications \\ via Spatial Path Index Modulation} %use this title for the journal paper and add THz to that
	\title{Millimeter-Wave Radar Beamforming with Spatial Path Index Modulation Communications}
	
	\author{
		\IEEEauthorblockA{Ahmet M. Elbir$^{\dag}$, Kumar Vijay Mishra$^{\ddag}$,  Abdulkadir \c{C}elik$^{+}$, Ahmed M. Eltawil$^{+}$}
		\IEEEauthorblockA{
			${\dag}$Interdisciplinary Centre for Security, Reliability and Trust, University of Luxembourg \\
%			$\bar{\dag}$Department of Electrical and Electronics Engineering, D\"{u}zce University, Turkey\\
			${\ddag}$United States DEVCOM Army Research Laboratory, Adelphi, USA\\
			$+$King Abdullah University of Science and Technology, Saudi Arabia}
		\IEEEauthorblockA{E-mail: ahmetmelbir@gmail.com, kvm@ieee.org, abdulkadir.celik@kaust.edu.sa, ahmed.eltawil@kaust.edu.sa }
	}
	
	\maketitle

	% Reduce spacing above and below equations
	%	\setlength{\abovedisplayskip}{3pt}
	%	\setlength{\belowdisplayskip}{3pt}

	\begin{abstract}
		To efficiently utilize the wireless spectrum and save hardware costs, the fifth generation and beyond (B5G) wireless networks envisage integrated sensing and communications (ISAC) paradigms to jointly access the spectrum. In B5G systems, the expensive hardware is usually avoided by employing hybrid beamformers that employ fewer radio-frequency chains but at the cost of the multiplexing gain. Recently, it has been proposed to overcome this shortcoming of millimeter wave (mmWave) hybrid beamformers through spatial path index modulation (SPIM), which modulates the spatial paths between the base station and users and improves spectral efficiency. In this paper, we propose an SPIM-ISAC approach for hybrid beamforming to simultaneously generate beams toward both radar targets and communications users. We introduce a low complexity approach for the design of hybrid beamformers, which include radar-only and communications-only beamformers. Numerical experiments demonstrate that our SPIM-ISAC approach exhibits a significant performance improvement over the conventional mmWave-ISAC design in terms of spectral efficiency and the generated beampattern.
	\end{abstract}

	\begin{IEEEkeywords}
		B5G, integrated sensing and communications, massive MIMO, millimeter wave, spatial modulation.
	\end{IEEEkeywords}

	\section{Introduction}
	\label{sec:Introduciton}
	Radar and communications systems have witnessed tremendous progress for several decades while exclusively operating in different frequency bands to minimize the interference to each other~\cite{mishra2019toward}. Modern radar systems operate in various portions of the spectrum -- from very-high-frequency (VHF) to Terahertz (THz)~\cite{elbir_thz_jrc_Elbir2022Aug} -- for different applications, such as over-the-horizon, air surveillance, meteorological, military, and automotive radars. Similarly, communications systems are progressing from ultra-high-frequency (UHF) to millimeter-wave (mmWave) in response to the demand for new services, massive number of users, and high data rate requirements for the applications
	\cite{elbir2021JointRadarComm}. As a result, there has been substantial interest in designing \textit{integrated sensing and communications} (ISAC) to jointly access the spectrum~\cite{mishra2019toward,liu2020co,duggal2020doppler,sedighi2021localization}.

	Future wireless communications are also focused on technologies to improve improved energy/spectral efficiency (EE/SE)~\cite{heath2016overview}.  In this context, index modulation (IM)  is emerging as an attractive area of research because it offers both better EE and SE over conventional modulators~\cite{indexMod_Survey_Mao2018Jul}. In IM, the transmitter encodes additional information in the indices of the transmission media such as subcarriers~\cite{jrc_IndexMod_Huang2020May,hodge2020intelligent}, antennas~\cite{antenna_grouping_SM,hodge2019reconfigurable}, and spatial paths~\cite{spim_bounds_JSTSP_Wang2019May,spim_BIM_TVT_Ding2018Mar,spim_GBM_Gao2019Jul}. In this paper, we focus on spatial modulation (SM) in the context of mmWave multiple-input multiple-output (MIMO) ISAC systems~\cite{mishra2019toward,elbir_thz_jrc_Elbir2022Aug}.
	
	In the mmWave-MIMO, hybrid analog/digital beamformers are employed, where the number of radio-frequency (RF) chains is much smaller than the antennas. While this saves cost and power, its multiplexing gain is limited~\cite{heath2016overview,elbir_Beamforming25_Elbir2022Nov}. The SM techniques have been shown to be helpful in addressing this problem. For SM-aided ISAC systems, \cite{jrc_spim_sm_Ma2021Feb} devised an SM approach, wherein the IM is performed over the antenna indices to handle sensing and communications tasks jointly. In~\cite{jrc_spim_SpatialPrthogonal_Li2022Mar}, a spatially orthogonal time-frequency SM was suggested for ISAC applications. This was further investigated in~\cite{jrc_generalized_SM_Xu2020Sep}, which employed SM over antenna indices for orthogonal frequency division multiplexing (OFDM) ISAC with each subcarrier assigned exclusively to an active antenna. 
	
	A more generalized scenario, i.e., spatial path index modulation (SPIM) was considered in~\cite{spim_BIM_TVT_Ding2018Mar}. Here, the indices of the spatial paths were modulated to create different \emph{spatial patterns} for mmWave-MIMO. In~\cite{spim_GBM_Gao2019Jul}, beamspace modulation was exploited by employing lens arrays at both transmitter and receiver. Moreover, SE was utilized as a performance metric in~\cite{spim_bounds_JSTSP_Wang2019May} for analog-only beamforming. A distributed machine  learning approach was used in~\cite{spim_FL_Elbir2021Jun} for multi-user SPIM in mmWave-MIMO. To sum up, the aforementioned ISAC works \cite{jrc_spim_sm_Ma2021Feb,jrc_spim_SpatialPrthogonal_Li2022Mar,jrc_generalized_SM_Xu2020Sep} do not exploit the SPIM while they only consider SM over antennas indices. Furthermore, the proposed SPIM approaches~\cite{spim_BIM_TVT_Ding2018Mar,spim_GBM_Gao2019Jul,spim_bounds_JSTSP_Wang2019May,spim_FL_Elbir2021Jun} consider the communications-only scenario without considering the trade-off between the radar and communications functionalities.

	In this paper, we introduce an SPIM-based hybrid beamformer design approach for ISAC. 
	Our proposed beamformer simultaneously maximizes the spectral efficiency at the communications user over SPIM-aided signaling and achieves as much signal-to-noise ratio (SNR) as possible for detecting the radar target. The SPIM-ISAC analog beamformer comprises radar-only and communications-only beamformers, which are selected from different spatial patterns between the base station (BS) and the communications user. The proposed design also includes a trade-off parameter between communications and radar sensing operations in the sense that the SNR at the targets and users is controlled.  In order to design the radar-only beamformer, the target direction is estimated in the search phase of the radar~\cite{music,elbir_DL_MUSIC}. Then, all possible spatial patterns are exploited from the estimated channel obtained via limited feedback techniques~\cite{mimoRHeath,heath2016overview}. Once the radar- (e.g., target directions) and communication- (e.g., path directions) related parameters are collected at the BS, the hybrid beamformer is designed for each spatial pattern in accordance with the trade-off parameter. The effectiveness of the proposed SPIM-ISAC approach is evaluated via numerical experiments and compared with conventional mmWave-ISAC, whose beamformers are designed in accordance with the strongest path between the BS and the user. We have shown that a significant performance improvement is achieved by the proposed approach in terms of communications- and radar-related  performance metrics.

	\textit{Notation:} Throughout the paper,  $(\cdot)^\textsf{T}$ and $(\cdot)^{\textsf{H}}$ denote the transpose and conjugate transpose operations, respectively. For a matrix $\mathbf{A}$ and vector $\mathbf{a}$; $[\mathbf{A}]_{ij}$, $[\mathbf{A}]_k$  and $[\mathbf{a}]_l$ correspond to the $(i,j)$-th entry, $k$-th column and $l$-th entry, respectively. $\lfloor\cdot \rfloor$ and $\mathbb{E}\{\cdot\}$ represent the flooring and expectation operations, respectively. We denote $|| \cdot||_2$ and $|| \cdot||_\mathcal{F}$ as the  $l_2$-norm and Frobenious norm, respectively.

	\section{System Model}
	Consider the transmitter design problem in an ISAC scenario involving a communications user and a radar target with SPIM (Fig.~\ref{fig_BS}). The BS has $N_\mathrm{T}$ antennas to jointly communicate with the user and sense the target via probing signals. The user has $N_\mathrm{R}$ antennas, for which $N_\mathrm{S}$ data symbols $\mathbf{s} = [s_1,\cdots,s_{N_\mathrm{S}}]^\textsf{T}\in \mathbb{C}^{N_\mathrm{S}}$ are transmitted, where $\mathbb{E}\{\mathbf{ss}^\textsf{H}\}=\mathbf{I}_{N_\mathrm{S}}$.  Additionally, the spatial path index information represented by ${s}_0$ is fed to the switching network (Fig.~\ref{fig_BS}) to randomly assign the outputs of $N_\mathrm{RF}$ RF chains to the $ \bar{M}$ taps of the analog beamformer. Thus, the BS processes at most $ \bar{M} = M+1$ spatial paths, where $M$ denotes the number of available spatial paths for the user while a single path is dedicated to the target. Note that the proposed system design is also applicable to multiple targets.
	
	Compared to the conventional mmWave systems, SPIM has the advantage of transmitting additional data streams by exploiting the \emph{spatial pattern} of the mmWave channel with limited RF chains, i.e., $N_\mathrm{RF} \leq  \bar{M}$~\cite{spim_bounds_JSTSP_Wang2019May}. For example, if $ {M}=1$, i.e., $N_\mathrm{RF} =2$, then SPIM reduces to conventional mmWave ISAC design because there is only one choice of transmission, i.e., one RF chain dedicated for radar target and the another one for the communication user. Define the  total number of spatial patterns as \cite{spim_GBM_Gao2019Jul}
	\begin{align}
	K = 2^{\left\lfloor \log_2 \left(\footnotesize\begin{array}{c}
		M \\
		\bar{N}_\mathrm{RF}
		\end{array}  \right) \right \rfloor  },
	\end{align}
	where $\bar{N}_\mathrm{RF} = N_\mathrm{RF}-1$, which indicates that the first column of the analog beamformer is dedicated to radar sensing while the remaining columns are employed for communications. Denote the index set of possible spatial patterns by $\mathcal{I} = \{1,\cdots, K\}$. Then, the $N_\mathrm{T}\times 1$ transmit signal for the $i$-th,  $i\in \mathcal{I}$, spatial pattern becomes
	\begin{align}
	\mathbf{x}^{(i)} = \mathbf{F}_\mathrm{RF}^{(i)}\mathbf{F}_\mathrm{BB}\mathbf{s},
	\end{align}
	where $\mathbf{F}_\mathrm{BB}\in\mathbb{C}^{N_\mathrm{RF}\times N_\mathrm{S}}$ is the baseband beamformer, and   ${\mathbf{F}}_\mathrm{RF}^{(i)} = [\mathbf{f}_\mathrm{R}, \tilde{\mathbf{F}}_\mathrm{RF}^{(i)}]\in N_\mathrm{T}\times N_\mathrm{RF}$ is the analog beamformer matrix containing the radar-only beamformer $\mathbf{f}_\mathrm{R}\in\mathbb{C}^{N_\mathrm{T}}$ and $\tilde{\mathbf{F}}_\mathrm{RF}^{(i)}\in \mathbb{C}^{N_\mathrm{T}\times (N_\mathrm{RF}-1)}$ is the communications-only analog beamformer for the $i$-th spatial pattern. Note that the analog beamformer $\mathbf{F}_\mathrm{RF}^{(i)}$ has constant-modulus constraint, i.e., $|[\mathbf{F}_\mathrm{RF}^{(i)}]_{i,j}| = 1/\sqrt{N_\mathrm{T}}$ for $i = 1,\cdots, N_\mathrm{T}$, $j = 1,\cdots,N_\mathrm{RF}$.

	%%-----------------------------------------------------
	\begin{figure}[t]
		\centering		{\includegraphics[draft=false,width=\columnwidth]{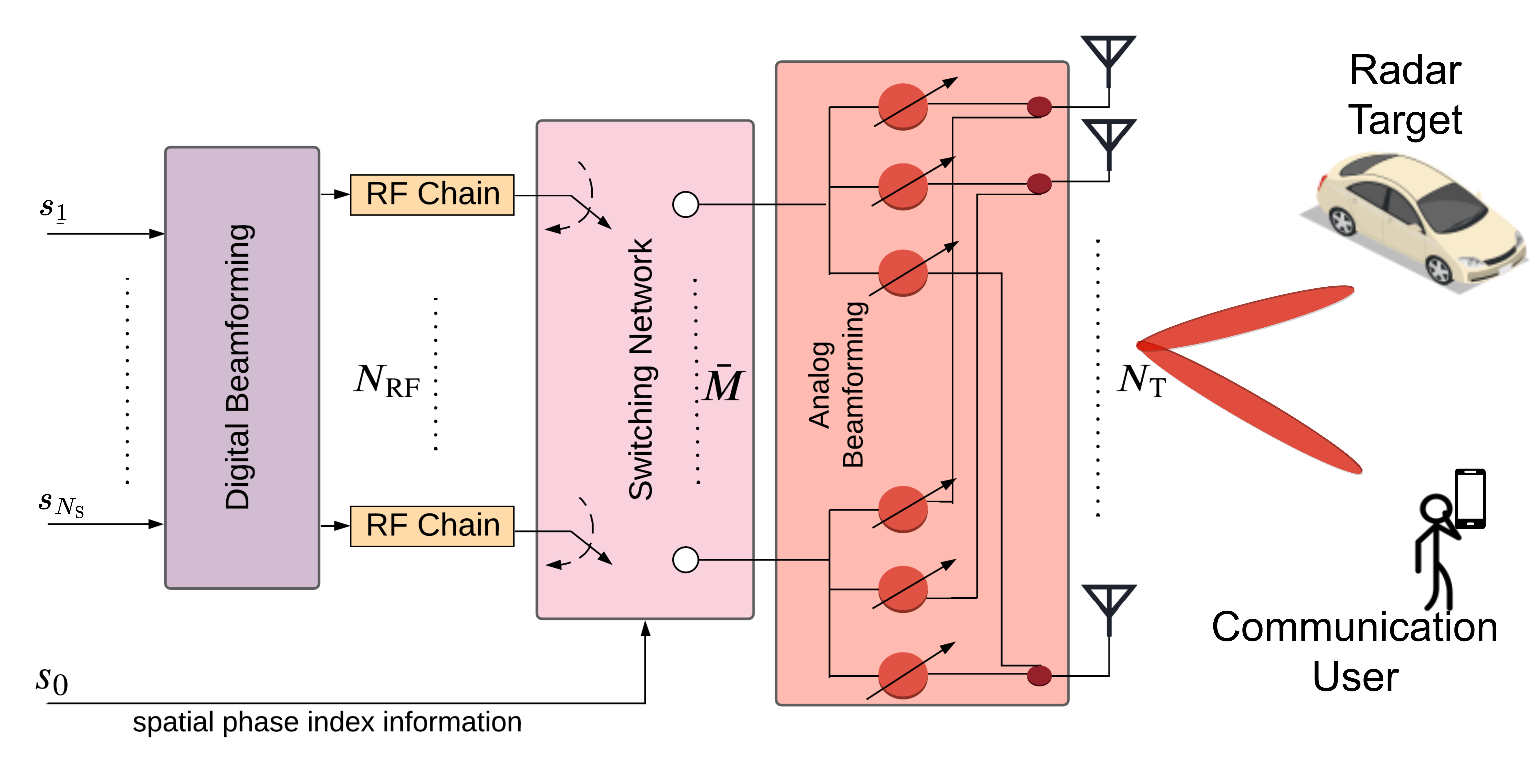} }
		\vspace*{-5mm} 
		\caption{The SPIM-ISAC architecture processes the incoming data streams and employs spatial path index information $s_0$ in a switching network, which connects $N_\mathrm{RF}$ RF chains to $\bar{M} = M+1$ taps on the analog beamformers to exploit a single path for the radar target and one of the $M$ spatial paths for the communications user. %\textcolor{red}{draw both - conventional mmWave and SPIM-ISAC mmWave. Otherwise, the reader does not understand the difference}
		}
		%			\vspace*{-5mm}
		\label{fig_BS}
	\end{figure}
	%%-----------------------------------------------------

	\subsection{Communications Receiver}
	Denote the direction-of-arrival (DoA) and direction-of-departure (DoD) angles of the scattering paths between the user and the BS by $\phi$ and $\theta$, respectively. The corresponding receive and transmit steering vectors is defined as $\mathbf{a}_\mathrm{R}(\phi)\in \mathbb{C}^{N_\mathrm{R}}$ and $\mathbf{a}_\mathrm{T}(\theta)\in \mathbb{C}^{N_\mathrm{T}}$, respectively. Then, for a uniform linear array (ULA) (with half-wavelength element spacing), the $n$-th elements of $\mathbf{a}_\mathrm{R}(\phi)$ and $\mathbf{a}_\mathrm{T}(\theta)$ are
	\begin{align}
	[\mathbf{a}_\mathrm{R}( \phi)]_n  &= \frac{1}{\sqrt{N_\mathrm{R}}}\exp\{-\mathrm{j}\pi (n-1) \sin (\phi) \} \nonumber\\
	[\mathbf{a}_\mathrm{T}( \theta)]_n &=\frac{1}{\sqrt{N_\mathrm{T}}}\exp\{-\mathrm{j}\pi (n-1) \sin (\theta) \}.
	\end{align}
	Then, the $N_\mathrm{R}\times N_\mathrm{T}$ mmWave channel is 
	\begin{align}
	\label{channel1}
	\mathbf{H} = \sum_{m = 1}^{M} \bar{\gamma}_m \mathbf{a}_\mathrm{R}(\phi_m) \mathbf{a}_\mathrm{T}^\textsf{H}(\theta_m),
	\end{align}
	where $\bar{\gamma}_m\in \mathbb{C}$ denotes the channel path gains for $m = 1,\cdots, M$. 
	
	In a compact form, the channel expression in (\ref{channel1}) is~\cite{heath2016overview}
	\begin{align}
	\label{channel}
	\mathbf{H} = \mathbf{P}\boldsymbol{\Lambda} \mathbf{Q}^\textsf{H},
	\end{align}
	where the matrices $\mathbf{P} \in \mathbb{C}^{N_\mathrm{R}\times M}$ and $\mathbf{Q}\in \mathbb{C}^{N_\mathrm{T}\times M}$ represent the receive and transmit array responses for $M$ paths, respectively, and $\boldsymbol{\Lambda}\in \mathbb{C}^{M \times M}$ is a diagonal matrix comprised of path gains $\bar{\gamma}_m = \sqrt{\gamma_m}$ as
	\begin{align}
	\boldsymbol{\Lambda} = \mathrm{diag}\{[\sqrt{\gamma_{1}},\cdots, \sqrt{\gamma_M}]\},
	\end{align}
	where $\gamma_{1} > \gamma_{2} > \cdots > \gamma_M$; in addition, we have $\mathbf{P} = [\mathbf{a}_\mathrm{R}(\phi_1),\cdots, \mathbf{a}_\mathrm{R}(\phi_M)]$ and $\mathbf{Q} = [\mathbf{a}_\mathrm{T}(\theta_1),\cdots,\mathbf{a}_\mathrm{T}(\theta_M)]$.
	
	Then, the $N_\mathrm{R}\times 1$ received signal at the communication user for the  $i$th spatial pattern is
	\begin{align}
	\mathbf{y}^{(i)} = \mathbf{H}\mathbf{F}_\mathrm{RF}^{(i)}\mathbf{F}_\mathrm{BB}\mathbf{s} + \mathbf{n},
	\end{align}
	where $\mathbf{n}\sim \mathcal{CN}(\mathbf{0},\sigma_n^2\mathbf{I}_{N_\mathrm{R}})\in\mathbb{C}^{N_\mathrm{R}}$ represents the temporarily and spatially  additive white Gaussian noise vector.

	\subsection{Radar Receiver}
	The aim of radar processing is to achieve the highest possible  SNR gain toward the direction of the target. Denote the estimate of radar target direction by $\Phi$ and select the radar-only beamformer as $\mathbf{f}_\mathrm{R} = \mathbf{a}_\mathrm{T}(\Phi)$. Then, using the hybrid beamforming structure, the beampattern of the radar is
	\begin{align}
	B^{(i)}(\Phi) = \mathrm{Trace}\{\mathbf{Q}^\textsf{H}(\Phi)\mathbf{R}_\mathbf{x}^{(i)} \mathbf{Q}(\Phi)  \},
	\end{align}
	where $\mathbf{R}_\mathbf{x}^{(i)}\in\mathbb{C}^{N_\mathrm{T}\times N_\mathrm{T}}$ is the covariance matrix of the transmit signal of the ISAC transmitter subject to the hybrid architecture of the beamformers. For the $i$-th spatial pattern, we have 
	\begin{align}
	\mathbf{R}_\mathbf{x}^{(i)} &= \mathbb{E}\{ \mathbf{xx}^\textsf{H} \}=\mathbb{E}\{ \mathbf{F}_\mathrm{RF}^{(i)}\mathbf{F}_\mathrm{BB}\mathbf{ss}^\textsf{H} \mathbf{F}_\mathrm{BB}^\textsf{H}\mathbf{F}_\mathrm{RF}^{(i)^\textsf{H}} \} \nonumber \\
	& = \mathbf{F}_\mathrm{RF}^{(i)}\mathbf{F}_\mathrm{BB}\mathbb{E}\{\mathbf{ss}^\textsf{H}\} \mathbf{F}_\mathrm{BB}^\textsf{H}\mathbf{F}_\mathrm{RF}^{(i)^\textsf{H}} \nonumber \\
	&= \mathbf{F}_\mathrm{RF}^{(i)}\mathbf{F}_\mathrm{BB} \mathbf{F}_\mathrm{BB}^\textsf{H}\mathbf{F}_\mathrm{RF}^{(i)^\textsf{H}}.
	\end{align}
	To simultaneously obtain the desired beampattern for the radar target and provide satisfactory communication performance, the hybrid beamformer $\mathbf{F}_\mathrm{RF}^{(i)}\mathbf{F}_\mathrm{BB}$ should be designed accordingly, as discussed in the following.
	
	%	which requires designing hybrid beamformers. Next, we 
	
	\subsection{Problem Formulation}
	For ISAC beamformer design, $\mathbf{F}_\mathrm{RF}^{(i)}$, $\mathbf{F}_\mathrm{BB}$, we minimize the Euclidean distance between the ISAC beamformers and the unconstrained beamformers. The fully-digital unconstrained communication-only beamformer is  $\mathbf{F}_\mathrm{opt}\in\mathbb{C}^{N_\mathrm{T}\times N_\mathrm{S}}$, which is obtained from the singular value decomposition (SVD) of $\mathbf{H}$~\cite{heath2016overview}. Define the joint radar-communications beamformer~\cite{elbir2021JointRadarComm} as
	\begin{align}
	\label{Fcr}
	\mathbf{F}_\mathrm{CR} = \eta \mathbf{F}_\mathrm{opt} + (1- \eta) \mathbf{f}_\mathrm{R}\boldsymbol{\xi},
	\end{align} 
	where $\boldsymbol{\xi}\in\mathbb{C}^{1\times N_\mathrm{S}}$ row vector providing the change of dimensions between $\mathbf{f}_\mathrm{R}$ and $\mathbf{F}_\mathrm{CR}$. In (\ref{Fcr}), $0\leq \eta\leq 1$ provides the trade-off between the radar and communications tasks. In particular, $\eta=1$ ($\eta = 0$) corresponds to the communications-only (radar-only) design problem.
	
	By combining communications-only and radar-only designs, the joint problem becomes
	\begin{align}
	\label{prob1}
	\minimize_{\mathbf{F}_\mathrm{RF}^{(i)}, \mathbf{F}_\mathrm{BB},\boldsymbol{\xi}} & \| \mathbf{F}_\mathrm{CR} - \mathbf{F}_\mathrm{RF}^{(i)}\mathbf{F}_\mathrm{BB} \|_\mathcal{F} \nonumber\\
	\subjectto &\hspace{20pt}|[\mathbf{F}_\mathrm{RF}^{(i)}]_{n,r}| = 1/\sqrt{N_\mathrm{T}},\nonumber\\ &\tilde{\mathbf{F}}_\mathrm{RF}^{(i)}\in \mathcal{A},\nonumber\\
	&\|\mathbf{F}_\mathrm{RF}^{(i)}\mathbf{F}_\mathrm{BB}\|_\mathcal{F} = N_\mathrm{S} ,\nonumber\\
	&\boldsymbol{\xi} \boldsymbol{\xi}^\textsf{H} = 1 ,
	\end{align}
	where $\mathcal{A}$ denotes the set of possible analog beamformers for the SPIM.  
	
	The optimization problem in \eqref{prob1} falls to the class of mixed-integer non-convex programming %(MINCP) 
	and computationally prohibitive due to combinatorial subproblem for each spatial pattern, and non-linear due to  multiple unknowns $\mathbf{F}_\mathrm{RF}^{(i)}, \mathbf{F}_\mathrm{BB},\boldsymbol{\xi}$. Instead, we propose a low complexity approach by exploiting the steering vectors corresponding to the path directions in the following.

	\section{SPIM in ISAC}
	%In this section, we first discuss the hybrid beamformer design for both radar and communications tasks, then introduce the proposed SPIM scheme for ISAC. 	
	The design of the analog beamformer $\mathbf{F}_\mathrm{RF}^{(i)}$ requires the knowledge of radar-only beamformer $\mathbf{f}_\mathrm{R}$ and communications-only beamformers $\tilde{\mathbf{F}}_\mathrm{RF}^{(i)}$ for $i\in \mathcal{I} = \{1,\cdots, K\}$. In other words, $\bar{N}_\mathrm{RF}$ columns of $N_\mathrm{T}\times N_\mathrm{RF}$ analog beamformer $\mathbf{F}_\mathrm{RF}$ are dedicated to the communications task, while a single column, i.e., $\mathbf{f}_\mathrm{R}$ is dedicated to the radar operations.
	
	\subsection{Hybrid Beamformer Design}
	The ISAC hybrid beamformer is comprised of radar-only and communication-only beamformers. In order to obtain the radar-only beamformer $\mathbf{f}_\mathrm{R}$, the target direction $\Phi$ is estimated in the search phase of the radar. Then, the $\mathbf{f}_\mathrm{R}$ is constructed as the steering vector corresponding to $\Phi$~\cite{elbir2021JointRadarComm,radarCommLiuICASSP2019}. To this end, the BS first transmits probing signals, for which the $N_\mathrm{T}\times 1$ received array output at the BS is 
	\begin{align}
	\label{receivedRadar}
	\bar{\mathbf{y}} (t) = \mathbf{a}_\mathrm{T}(\Phi)\mathbf{a}_\mathrm{T}^\textsf{T}(\Phi) r(t) + \bar{\mathbf{n}}(t),
	\end{align}
	where $t$ denotes the sample index for $t = 1,\cdots, T_R$, where $T_R$ is the number of snapshots, $r(t)$ is the reflection coefficient from the target at direction $\Phi$ and $\bar{\mathbf{n}}\in \mathbb{C}^{N_\mathrm{T}}$ is the noise term.  Next, the BS utilizes the sample covariance matrix of the received signal in (\ref{receivedRadar}) as
	\begin{align}
	\bar{\mathbf{R}}_\mathbf{y} = \frac{1}{T_R} \sum_{t = 1}^{T_R} \bar{\mathbf{y}}(t) \bar{\mathbf{y}}^\textsf{H}(t),
	\end{align}
	which is used to estimate $\Phi$ via both model-based techniques, e.g., \textit{mu}ltiple \textit{si}gnal \textit{c}lassification (MUSIC)~\cite{music} and model-free, deep learning based approaches, e.g., DeepMUSIC~\cite{elbir_DL_MUSIC}. Then, the radar-only beamformer is selected as $\mathbf{f}_\mathrm{R} = \mathbf{a}_\mathrm{T}(\hat{\Phi})$.
	
	The design of communications-only analog beamformer $\tilde{\mathbf{F}}_\mathrm{RF}^{(i)}$ requires knowledge of all scattering paths, which constitute the array response matrix $\mathbf{Q}$ given in \eqref{channel}. In fact, the  channel parameters $\mathbf{P}$, $\boldsymbol{\Lambda}$, and $\mathbf{Q}$ are computed during the channel estimation stage of the receiver via both model-based~\cite{mimoRHeath,heath2016overview} and model-free techniques~\cite{dl_ChannelEst_Abdallah2021Nov,dl_online_CE_HB_Elbir2021Dec}. These estimates are then sent to the BS using limited feedback techniques~\cite{heath2016overview}.

	%   We assume that the wireless channel $\mathbf{H}$ is estimated and available at the BS prior to the beamforming design along with the array response matrices $\mathbf{P}$, $\boldsymbol{\Lambda}$ and $\mathbf{Q}$, which can be obtained via limited feedback techniques~\cite{heath2016overview}. We also assume that the target direction $\tilde{\theta}$ is estimated in the search phase of the radar, which can be achieved via 
	
	Once the beamformers - radar-only $\mathbf{f}_\mathrm{R}$ and communications-only beamformers $\tilde{\mathbf{F}}_\mathrm{RF}^{(i)}$ - are obtained, then the analog beamformer is designed as
	\begin{align}
	\label{Frf}
	{\mathbf{F}}_\mathrm{RF}^{(i)} = [\mathbf{f}_\mathrm{R}, \tilde{\mathbf{F}}_\mathrm{RF}^{(i)}].
	\end{align}
	Furthermore, the baseband beamformer is $\mathbf{F}_\mathrm{BB} = \mathrm{blkdiag}\{(\eta-1), \eta \mathbf{I}_{N_\mathrm{RF}-1}  \}$, which allows the trade-off between the radar and communications tasks.  The trade-off between the radar and communications tasks is controlled by choosing $\eta$ depending on the accuracy/importance of both tasks in the ISAC system~\cite{elbir_IRS_ISAC_Elbir2022Apr}.

	%\textcolor{red}{need more details here. See my paper \cite{liu2020co} on how I explain the use of MI for ISAC and use some of those points. RadarConf reviewers hate the use of MI. This is a sensitive issue in the radar community. So, we need to build a good case.}

	%\textcolor{blue}{In this section, we first derive the MI expression for mmWave-ISAC system, then introduce the MI expression for SPIM-ISAC.} \textcolor{red}{never use `in this section' to repeat what we have already said before. Use this leading text to refer to some literature and make useful/informative comments. You need to make this a habit in papers for good writing.}
	
	%	\vspace{-30pt}
	\subsection{Performance Analysis}
	The radar sensing performance of the proposed SPIM-ISAC system is quantified by computing the beampattern of the hybrid beamformers (see Sec.~\ref{sec:RadarSensingSim}). In order to evaluate the communication performance, the mutual information (MI) between the transmit and receive signals, i.e.,  $\mathbf{x}^{(i)}$, $\mathbf{y}^{(i)}$ is utilized.  In the following, we compute the MI expressions for mmWave-ISAC and SPIM-ISAC, respectively.
	
	%	In the literature, the computation of MI has been well investigated for communication-only scenario~\cite{spim_AsymptoticMI_He2017Nov,spim_bounds_JSTSP,spim_GBM}, while the MI expression for ISAC is not considered. 
	
	Denote arbitrary hybrid beamformers by ${\mathbf{F}}_\mathrm{RF}$ and ${\mathbf{F}}_\mathrm{BB}$. The MI is computed as~\cite{heath2016overview}
	\begin{align}
	\label{MI_general}
	\mathcal{M} = \log_2 \left(\mathrm{det}\left\{\mathbf{I}_{N_\mathrm{R}} + \frac{1}{\sigma_n^2}\mathbf{H}{\mathbf{F}}_\mathrm{RF}{\mathbf{F}}_\mathrm{BB}{\mathbf{F}}_\mathrm{BB}^\textsf{H}{\mathbf{F}}_\mathrm{RF}^\textsf{H}\mathbf{H}^\textsf{H}   \right \}  \right).
	\end{align}
	
	In the conventional mmWave systems, the analog beamformer $\tilde{\mathbf{F}}_\mathrm{RF}$ relies on the selection of the strongest path for hybrid beamformer design~\cite{spim_AsymptoticMI_He2017Nov,spim_bounds_JSTSP_Wang2019May}. As an example, for a single target and single user case, we have  $\mathbf{F}_\mathrm{RF} =[ \mathbf{a}_\mathrm{T}(\Phi),\mathbf{a}_\mathrm{T}(\theta_1)]$, where $\mathbf{a}_\mathrm{T}(\theta_1)$ corresponds to the strongest communication path with path gain $\gamma_{1}$. Similarly, the MI for mmWave-ISAC system is computed for the spatial pattern $i=1$ as
	\begin{align}
	&\mathcal{M}_\mathrm{mmWave}  =\log_2 \bigg(\mathrm{det}\bigg\{\mathbf{I}_{N_\mathrm{R}} \nonumber \\
	&\hspace{70pt}+ \frac{1}{\sigma_n^2}\mathbf{H}{\mathbf{F}}_\mathrm{RF}^{(1)}{\mathbf{F}}_\mathrm{BB}{\mathbf{F}}_\mathrm{BB}^\textsf{H}{\mathbf{F}}_\mathrm{RF}^{(1)^\textsf{H}}\mathbf{H}^\textsf{H}   \bigg \}  \bigg). \label{MI_mmwave1}
	\end{align}

	The computation of MI in (\ref{MI_mmwave1}) is computationally complex, especially for large number of antennas.  In the following Proposition~\ref{prop:mi}, we introduce a closed-form expression for the asymptotic MI  of the  mmWave-ISAC system with massive antenna array assumption, i.e., $N_\mathrm{T}\gg 1$.

	%	\textcolor{red}{The following Proposition comes up suddenly. You need to write something before this. See how I introduce theorems etc. in my book chapter \url{https://arxiv.org/pdf/1803.01819.pdf} and then also write about them after the proof.}
	
	%\textcolor{red}{completely unclear Proposition statement. You need to start this as: Consider the mmWave-ISAC system/receive signal in eq. .... Then, ... Remember Theorems etc. are mathematical statements. You have to be precise}
	
	\begin{proposition}\label{prop:mi}
		Consider the mmWave-ISAC system with massive antenna array deployment (i.e., $N_\mathrm{T}\gg 1$). Then, the MI for conventional mmWave-ISAC system is 
		\begin{align}
		\label{mmWaveTheoretical}
		\bar{\mathcal{M}}_\mathrm{mmWave} = \log_2 \left( 1 +  \frac{\eta^2\gamma_1 }{\sigma_n^2} \right),
		\end{align}
		where $\gamma_1$  corresponds to the strongest path gain.
	\end{proposition}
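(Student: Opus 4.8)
\emph{Proof plan.} The strategy is to substitute the explicit mmWave-ISAC beamformers into the determinant in \eqref{MI_mmwave1} and then collapse it using the asymptotic orthogonality of transmit steering vectors in the massive-array regime. For the single-target, single-user case one has $N_\mathrm{RF}=2$, $\mathbf{F}_\mathrm{RF}^{(1)}=[\mathbf{a}_\mathrm{T}(\Phi),\,\mathbf{a}_\mathrm{T}(\theta_1)]$ and $\mathbf{F}_\mathrm{BB}=\mathrm{diag}\{\eta-1,\,\eta\}$. Inserting $\mathbf{H}=\mathbf{P}\boldsymbol{\Lambda}\mathbf{Q}^\textsf{H}$ from \eqref{channel}, the matrix inside $\mathrm{det}\{\cdot\}$ becomes $\mathbf{I}_{N_\mathrm{R}}+\tfrac{1}{\sigma_n^2}\mathbf{P}\boldsymbol{\Lambda}\,\mathbf{G}\,\mathbf{F}_\mathrm{BB}\mathbf{F}_\mathrm{BB}^\textsf{H}\,\mathbf{G}^\textsf{H}\boldsymbol{\Lambda}^\textsf{H}\mathbf{P}^\textsf{H}$, where $\mathbf{G}=\mathbf{Q}^\textsf{H}\mathbf{F}_\mathrm{RF}^{(1)}\in\mathbb{C}^{M\times 2}$, so the whole computation reduces to identifying the limit of $\mathbf{G}$.

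\emph{Asymptotic orthogonality.} The key fact is that, for a half-wavelength ULA, $\mathbf{a}_\mathrm{T}^\textsf{H}(\theta_i)\mathbf{a}_\mathrm{T}(\theta_j)=\tfrac{1}{N_\mathrm{T}}\sum_{n=0}^{N_\mathrm{T}-1}e^{\mathrm{j}\pi n(\sin\theta_i-\sin\theta_j)}$ is a normalized Dirichlet kernel whose modulus is bounded by $\big(N_\mathrm{T}\,|\sin(\tfrac{\pi}{2}(\sin\theta_i-\sin\theta_j))|\big)^{-1}$ and hence vanishes as $N_\mathrm{T}\to\infty$ whenever $\theta_i\neq\theta_j$, while it equals $1$ for $\theta_i=\theta_j$. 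Assuming a generic geometry in which $\Phi$ and the path angles $\{\theta_m\}_{m=1}^{M}$ are pairwise distinct, this gives $\mathbf{Q}^\textsf{H}\mathbf{a}_\mathrm{T}(\Phi)\to\mathbf{0}$ and $\mathbf{Q}^\textsf{H}\mathbf{a}_\mathrm{T}(\theta_1)\to\mathbf{e}_1$ (the first canonical vector of $\mathbb{R}^{M}$), so $\mathbf{G}\to\mathbf{e}_1\mathbf{g}^\textsf{T}$ with $\mathbf{g}=[0,\,1]^\textsf{T}$.

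\emph{Collapsing the determinant.} Propagating the limit, $\mathbf{P}\boldsymbol{\Lambda}\mathbf{e}_1=\sqrt{\gamma_1}\,\mathbf{a}_\mathrm{R}(\phi_1)$ exactly by the definitions of $\mathbf{P}$ and $\boldsymbol{\Lambda}$, and $\mathbf{g}^\textsf{T}\mathbf{F}_\mathrm{BB}\mathbf{F}_\mathrm{BB}^\textsf{H}\mathbf{g}=\eta^2$, so the argument of $\mathrm{det}\{\cdot\}$ converges to the rank-one perturbation $\mathbf{I}_{N_\mathrm{R}}+\tfrac{\eta^2\gamma_1}{\sigma_n^2}\,\mathbf{a}_\mathrm{R}(\phi_1)\mathbf{a}_\mathrm{R}^\textsf{H}(\phi_1)$. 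Applying Sylvester's determinant identity $\mathrm{det}(\mathbf{I}+\mathbf{u}\mathbf{v}^\textsf{H})=1+\mathbf{v}^\textsf{H}\mathbf{u}$ together with the normalization $\|\mathbf{a}_\mathrm{R}(\phi_1)\|_2^2=1$ then yields $\bar{\mathcal{M}}_\mathrm{mmWave}=\log_2\!\big(1+\eta^2\gamma_1/\sigma_n^2\big)$, that is, \eqref{mmWaveTheoretical}.

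\emph{Main obstacle.} The delicate part is rigor in the asymptotic step: one must (i) assume the radar target and all communication paths occupy distinct angular bins so that every off-diagonal steering-vector correlation vanishes in the limit, and (ii) justify interchanging $\lim_{N_\mathrm{T}\to\infty}$ with $\mathrm{det}\{\cdot\}$ and $\log_2(\cdot)$. The latter is legitimate because the matrix of interest has the fixed size $N_\mathrm{R}\times N_\mathrm{R}$ and both maps are continuous, so entrywise convergence of $\mathbf{G}$ suffices; for finite $N_\mathrm{T}$ the formula is an approximation with residual $O(1/N_\mathrm{T})$ stemming from the Dirichlet side lobes, which could be quantified by bounding $\|\mathbf{G}-\mathbf{e}_1\mathbf{g}^\textsf{T}\|_\mathcal{F}$.
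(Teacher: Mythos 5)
Your proposal is correct and follows essentially the same route as the paper's proof: substitute $\mathbf{F}_\mathrm{RF}^{(1)}=[\mathbf{a}_\mathrm{T}(\Phi),\mathbf{a}_\mathrm{T}(\theta_1)]$ and $\mathbf{F}_\mathrm{BB}=\mathrm{diag}\{\pm(1-\eta),\eta\}$ into \eqref{MI_mmwave1}, invoke asymptotic orthogonality of the transmit steering vectors so that only the $\sqrt{\gamma_1}\,\mathbf{a}_\mathrm{R}(\phi_1)$ term survives, and collapse the resulting rank-one perturbation with $\mathrm{det}(\mathbf{I}+\mathbf{u}\mathbf{v}^\textsf{H})=1+\mathbf{v}^\textsf{H}\mathbf{u}$ and $\|\mathbf{a}_\mathrm{R}(\phi_1)\|_2^2=1$. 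The only difference is that you make explicit what the paper leaves implicit (the Dirichlet-kernel decay rate, the pairwise-distinct-angle assumption needed so that $\mathbf{Q}^\textsf{H}\mathbf{a}_\mathrm{T}(\theta_1)\to\mathbf{e}_1$ and not just the $\Phi$-correlation vanishing, and the continuity argument for passing the limit through $\det$ and $\log_2$), which is a welcome tightening rather than a different method.
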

	
	\begin{proof}
		Since only the strongest path is considered in mmWave-ISAC~\cite{spim_AsymptoticMI_He2017Nov,spim_bounds_JSTSP_Wang2019May}, the analog and digital beamformers are $\mathbf{F}_\mathrm{RF} =\left[ \mathbf{a}_\mathrm{T}(\Phi),\mathbf{a}_\mathrm{T}(\theta_1)\right] $ and $\mathbf{F}_\mathrm{BB}=\left[\begin{array}{cc}
		(1-\eta) &0 \\
		0 & \eta 
		\end{array}\right]$, respectively. Then, using the expression in (\ref{MI_general}, the asymptotic MI for mmWave-ISAC is 
		\begin{align}
		\label{mmWave2}
		&\bar{\mathcal{M}}_\mathrm{mmWave} =  \log_2  \bigg(  \mathrm{det}\bigg\{\mathbf{I}_{N_\mathrm{R}}  +  \frac{1}{\sigma_n^2} \mathbf{H}\left[ \mathbf{a}_\mathrm{T}(\Phi),\mathbf{a}_\mathrm{T}(\theta_1)\right]\nonumber\\
		&	\hspace{20pt}  \times  \left[\begin{array}{cc}
		(1-\eta)^2 &0 \\
		0 & \eta^2 
		\end{array}\right]  \left[ \mathbf{a}_\mathrm{T}(\Phi),\mathbf{a}_\mathrm{T}(\theta_1)\right]^\textsf{H}   \mathbf{H}^\textsf{H}    \bigg\}    \bigg) \nonumber \\
		&=   \log_2  \bigg(  \mathrm{det}\bigg\{\mathbf{I}_{N_\mathrm{R}}   +  \frac{1}{\sigma_n^2} \mathbf{H}  \bigg((1-\eta)^2 \mathbf{a}_\mathrm{T}(\Phi)\mathbf{a}_\mathrm{T}^\textsf{H}(\Phi) \nonumber\\
		&\hspace{30pt} + \eta^2\mathbf{a}_\mathrm{T}(\theta_1)\mathbf{a}_\mathrm{T}^\textsf{H}(\theta_1) \bigg)  \mathbf{H}^\textsf{H}    \bigg\}    \bigg). 
		\end{align}
		Using the orthogonality of steering vectors at different angles, i.e., $\lim_{N_\mathrm{T}\rightarrow +\infty} |\mathbf{a}_\mathrm{T}^\textsf{H}(\theta_1)\mathbf{a}_\mathrm{T}(\Phi)  | = 0  $, we get $\mathbf{H}  \mathbf{a}_\mathrm{T}(\Phi) = \mathbf{0}_{N_\mathrm{R}}$. Then, (\ref{mmWave2}) becomes
		\begin{align}
		\label{mmWave3}
		&\bar{\mathcal{M}}_\mathrm{mmWave} =   \log_2  \bigg(  \mathrm{det}\bigg\{\mathbf{I}_{N_\mathrm{R}}   +  \frac{\eta^2}{\sigma_n^2} \mathbf{H}   \mathbf{a}_\mathrm{T}(\theta_1)\mathbf{a}_\mathrm{T}^\textsf{H}(\theta_1)  \mathbf{H}^\textsf{H}    \bigg\}    \bigg),
		\end{align}
		where we have  $\mathbf{H}\mathbf{a}_\mathrm{T}(\theta_1) = \sqrt{\gamma_{1}}\mathbf{a}_\mathrm{R}(\phi_1) \mathbf{a}_\mathrm{T}^\textsf{H}(\theta_1)\mathbf{a}_\mathrm{T}(\theta_1)= \sqrt{\gamma_{1} } \mathbf{a}_\mathrm{R}(\phi_1) $ with $\mathbf{a}_\mathrm{T}^\textsf{H}(\theta_1) \mathbf{a}_\mathrm{T}(\theta_1) = 1$. Hence, (\ref{mmWave3}) becomes 
		\begin{align}
		\bar{\mathcal{M}}_\mathrm{mmWave}\hspace{-3pt} &= \log_2  \left(  \mathrm{det}\left\{\mathbf{I}_{N_\mathrm{R}}  +  \frac{\eta^2\gamma_{1}}{\sigma_n^2} \mathbf{a}_\mathrm{R}(\phi_1)\mathbf{a}_\mathrm{R}^\textsf{H}(\phi_1)    \right\}     \right) \nonumber \\
		&= \hspace{-3pt}\log_2  \left(\hspace{-3pt}  \mathrm{det}\left\{1  +  \frac{\eta^2\gamma_{1}}{\sigma_n^2} \mathbf{a}_\mathrm{R}^\textsf{H}(\phi_1)\mathbf{a}_\mathrm{R}(\phi_1)  \hspace{-3pt}  \right\}    \right) \label{mmWaveProof1} \\
		&= \log_2  \left(  1  +  \frac{\eta^2\gamma_{1}}{\sigma_n^2}       \right), \label{mmWaveProof}
		\end{align}
		where, the equality in (\ref{mmWaveProof1}) utilizes the matrix determinant property, i.e., $\mathrm{det}\{\mathbf{I}_\mathrm{R} + \mathbf{a}_\mathrm{R}(\phi_1)\mathbf{a}_\mathrm{R}^\textsf{H}(\phi_1)     \} = \mathrm{det}\{1  + \mathbf{a}_\mathrm{R}^\textsf{H}(\phi_1)\mathbf{a}_\mathrm{R}(\phi_1) \}  $. Furthermore, we have  $\mathbf{a}_\mathrm{R}(\phi_1)\mathbf{a}_\mathrm{R}^\textsf{H}(\phi_1) = 1$. %Then we obtain \eqref{mmWaveTheoretical} in \eqref{mmWaveProof}, which 
		This completes the proof.
	\end{proof}
	
	Proposition~\ref{prop:mi} allows us to compute the MI for mmWave-ISAC system with low complexity via (\ref{mmWaveTheoretical}). In the following, we introduce the computation of MI for SPIM-ISAC system. 
	
	The MI expression for SPIM-ISAC is obtained by taking into account the received spatially modulated signal $\mathbf{y}^{(i)}$. Hence, we first compute the covariance matrix of $\mathbf{y}^{(i)}$ as
	\begin{align}
	\boldsymbol{\Sigma}_i = \mathbf{H}\mathbf{F}_\mathrm{RF}^{(i)}\mathbf{F}_\mathrm{BB}\mathbf{F}_\mathrm{BB}^\textsf{H}\mathbf{F}_\mathrm{RF}^{(i)^\textsf{H}}\mathbf{H}^\textsf{H} + \sigma_n^2 \mathbf{I}_{N_\mathrm{R}},
	\end{align}
	which includes the effect of ISAC trade-off parameter $\eta$ in $\mathbf{F}_\mathrm{BB}$. 
	Then, the MI of the overall SPIM-ISAC system is computed using the following asymptotic closed-form expression~\cite{spim_AsymptoticMI_He2017Nov,spim_bounds_JSTSP_Wang2019May}, i.e., 
	\begin{align}
	&\mathcal{M}_\mathrm{SPIM} \nonumber\\
	&= \log_2 \left( \frac{K}{(2\sigma_n^2)^{N_\mathrm{R}}}\right) - \frac{1}{K} \sum_{i = 1}^K \log_2  \left(\sum_{j = 1}^K \mathrm{det}\{\boldsymbol{\Sigma}_{i} + \boldsymbol{\Sigma}_{j}  \}^{-1}\right),
	\end{align}
	where $i,j\in \{1,\cdots, M\}$ denote spatial path index.
	
	%	where $\boldsymbol{\Sigma}_i\in \mathbb{C}^{N_\mathrm{R}\times N_\mathrm{R}}$ is the covariance matrix of $\mathbf{y}^{(i)}$ as
	
	%	\textcolor{red}{again, this equation should have been explained before the MI}
	
	%	Besides, the MI expression for the conventional mmWave transmission without SPIM is given by the well known Shannon's formula~\cite{spim_bounds_JSTSP} as
	%	\begin{align}
	%	\label{mmWaveTheoretical}
	%	\mathcal{M}_\mathrm{mmWave} = \log_2 \left( 1 +  \frac{\gamma_1 N_\mathrm{T}}{\sigma_n^2} \right),
	%	\end{align}
	%	wherein the spatial path corresponding to the strongest path gain, i.e., $\gamma_{1}$ is selected.

	\section{Numerical Experiments}
	\label{sec:Sim}
	We evaluated the performance of our SPIM-ISAC approach with conventional mmWave-ISAC in terms of MI averaged over $500$ Monte Carlo trials. The number of antennas at the BS and the users are $N_\mathrm{T}=128$ and $N_\mathrm{R}=10$, respectively. We select the number of available spatial paths as $M=2$ ($\bar{M} = 3$ and $K=2$), and $N_\mathrm{RF} = N_\mathrm{S} = 2$. The target DoA is $\Phi = 40^\circ$ and path directions are drawn from $[-90^\circ,90^\circ]$ uniformly at random.

	%%-----------------------------------------------------
	\begin{figure}[t]
		\centering
		{\includegraphics[draft=false,width=\columnwidth]{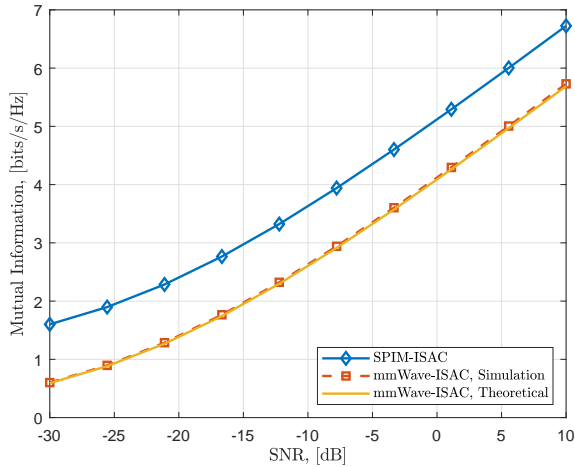} } 
		\vspace*{-6mm}
		\caption{MI versus SNR when the path gains $\gamma_{1} = \gamma_{2} =0.5$ and the radar-communications trade-off parameter $\eta = 0.5$.
		}
		
		\label{fig_SE1}
	\end{figure}
	%%-----------------------------------------------------
	
	\subsection{Communication Performance}
	
	Fig.~\ref{fig_SE1} shows the MI with respect to SNR (i.e., $1/\sigma_n^2$) when $\gamma_1 = \gamma_{2} =0.5$ and the radar-communications trade-off parameter is $\eta =0.5$. We observe a significant improvement in MI with our proposed SPIM-ISAC approach compared to mmWave-ISAC even only $M=2$ paths are available for SPIM. The computation of mmWave-ISAC is obtained both theoretically (i.e., \eqref{mmWaveTheoretical} ) and numerically (i.e., (\ref{MI_mmwave1})). We observe that both simulated and theoretical curves match as shown in Fig.~\ref{fig_SE1}.

	The performance of SPIM-ISAC is only favorable only when the path gains are comparable. In~\cite{spim_bounds_JSTSP_Wang2019May}, it was shown that $\mathcal{M}_\mathrm{SPIM} > \mathcal{M}_\mathrm{mmWave}$ only if $\gamma_{1} < 4\gamma_{2}$, where $\gamma_{1} + \gamma_2 = 1$. We validate our analysis as illustrated in Fig.~\ref{fig_SE_w1}, when SNR is $20$ dB and $\eta = 0.5$. The performance of SPIM-ISAC significantly degrades as $\gamma_{1}> \gamma_{2}$, for which mmWave-ISAC always prefers the strongest path. We conclude that SPIM-ISAC is favorable when the path gains are comparable. When there is a significant gap between the path gains, then mmWave-ISAC yields higher MI. Note that the performance improvement obtained from SPIM-ISAC is limited to the number of available spatial paths in the environment. Furthermore, higher MI is achieved by employing more RF chains at the cost of higher hardware complexity.

	%%-----------------------------------------------------
	\begin{figure}[t]
		\centering
		{\includegraphics[draft=false,width=\columnwidth]{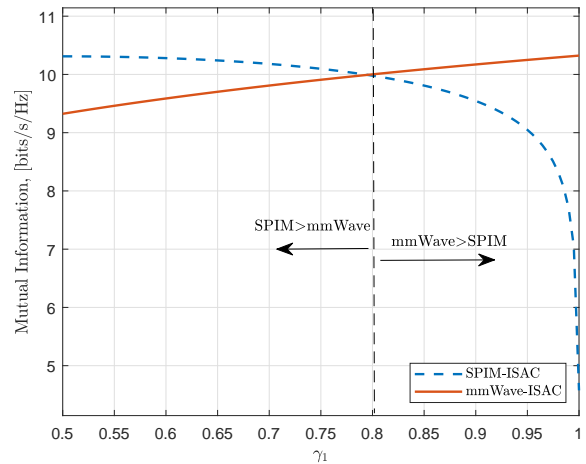} } 
		\vspace*{-6mm}
		\caption{MI versus the first path gain $\gamma_{1}$ when $\mathrm{SNR} = 20$ dB and $\eta = 0.5$. Note that $\gamma_{2} = 1-\gamma_{1}$.
		}
		
		\label{fig_SE_w1}
	\end{figure}
	%%-----------------------------------------------------

	\subsection{Radar Sensing Performance}
	\label{sec:RadarSensingSim}
	Next, we evaluated the MI performance with respect to the radar-communications trade-off. In Fig.~\ref{fig_BP}, the beampattern of the proposed ISAC hybrid beamformer is shown for $\eta = \{0,0.3,0.5,0.8,1\}$ and $\gamma_{1} = \gamma_{2} =0.5$, when $i = \{1,2\}$ spatial patterns are used. In this scenario, the target is located at $40^\circ$ while the BS receives the incoming paths from the communications user at $50^\circ$ ($i=1$) and $60^\circ$ ($i =2$), respectively. The beampattern becomes suppressed at the target direction when $\eta \rightarrow 1$. Conversely, the beampattern at the user locations is minimized when $\eta \rightarrow 0$. This illustrates the effectiveness of our proposed SPIM-ISAC approach.

	%%-----------------------------------------------------
	\begin{figure}[t]
		\centering
		{\includegraphics[draft=false,width=\columnwidth]{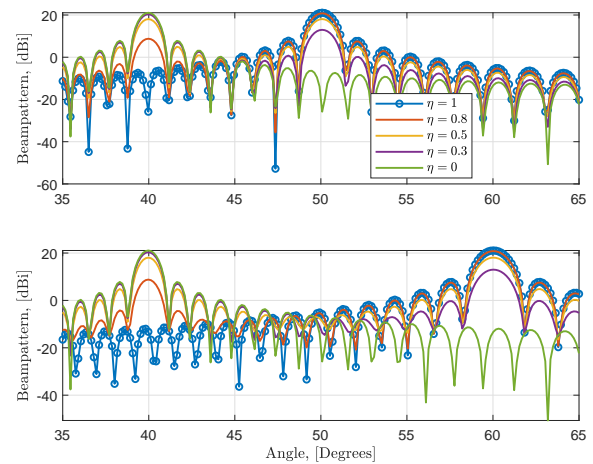} } 
		\vspace*{-4mm}
		\caption{Azimuthal beampattern when   $i=1$ (top) and $i =2$ (bottom) for various values of the radar-communications trade-off parameter, $\eta = \{0,0.3,0.5,0.8,1\}$. Here, the path gains $\gamma_{1} = \gamma_{2} =0.5$.
		}
		
		\label{fig_BP}
	\end{figure}
	%%-----------------------------------------------------

	\section{Summary}
	\label{sec:Conc}
	We introduced an SPIM framework for ISAC, wherein the hybrid beamformers are designed by exploiting the spatial scattering paths between the BS and the communications user. We have shown that a significant performance improvement is achieved via SPIM-ISAC compared to conventional mmWave-ISAC, wherein only the strongest path is selected for beamformer design. We have evaluated the performance of our SPIM-ISAC technique in terms of both communications (MI) and radar (beampattern) performance metrics. The proposed approach needs to be extended for a multi-target multi-user scenario. %, wherein multi-user-interference and baseband beamformer design are challenging problems. 

	%	 \textcolor{red}{Some more specific points needed in the summary. The last two sentences are same as abstract and Intro. Please change or quote some other result.} %As future work, we reserve to study for larger combination of spatial paths per user under the effect of model quantization and sparsification.

	%	\textcolor{red}{references are not uniformly formatted}
	%	\clearpage % no need for RadarConf
	%	\newpage
	\balance
	\bibliographystyle{IEEEtran}
	\bibliography{IEEEabrv,references_116}

\end{document}